\documentclass[12pt, a4paper]{article}
\usepackage[utf8]{inputenc}

\usepackage{amsmath}
\usepackage{amssymb}
\usepackage{amsthm}
\usepackage{authblk}
\usepackage{braket}
\usepackage{framed}
\usepackage{algorithm2e}
\usepackage{xcolor}

\usepackage{tikz}
\usetikzlibrary{positioning}
\usetikzlibrary{decorations.pathreplacing}

\newcommand{\smgd}{\textsf{SMGD}\xspace}
\newcommand{\HH}{\mathcal{H}\xspace}

\DeclareMathOperator{\shiftm}{ShiftedMatch}
\DeclareMathOperator{\fullm}{FullMatch}
\DeclareMathOperator{\forwm}{ForwardMatch}
\DeclareMathOperator{\backm}{BackwardMatch}

\newtheorem{theorem}{Theorem}

\newtheorem{lemma}[theorem]{Lemma}
\newtheorem{problem}{Problem}

\title{Quantum Pattern Matching in Generalised Degenerate Strings}
\date{}

\author[1]{Massimo Equi}
\author[2]{{Md Rabiul Islam} Khan}
\author[2]{Veli M\"akinen}

\affil[1]{Aalto University, Finland}
\affil[2]{University of Helsinki, Finland}

\begin{document}
\maketitle
\begin{abstract}
A \emph{degenerate string} is a sequence of sets of characters. A \emph{generalized degenerate (GD) string} extends this notion to the sequence of sets of strings, where strings of the same set are of equal length. Finding an exact match for a pattern string inside a GD string can be done in $O(mn+N)$ time (Ascone et al., WABI 2024), where $m$ is the pattern length, $n$ is the number of strings and $N$ the total length of strings constituting the GD string. This is the best classical algorithm achieved so far, and no matching lower bound, neither unconditional nor conditional, has been shown. We make progress on this problem proposing a quantum algorithm that achieves running time $\tilde{O}(\sqrt{mnN})$, thus beating the current best classical solution. To the best of our knowledge, this is the first quantum algorithm proposed in the context of GD strings. We present our results starting from the framework of classical parallel computing, which we believe makes them intuitive to understand and possibly easy to generalize to other similar structures.
\end{abstract}


\section{Introduction}
\label{sec:intro}
The exact string matching problem is to decide if a pattern string $P$ of length $m$ appears as a substring of a text string $T$ of length $n$. This problem can be solved in $O(m+n)$ time \cite{KMP77} under classical models of computation. The first quantum algorithm for a string matching problem was given by Ramesh and Vinay~\cite{RV03}, whose solution could find an exact match in time $\tilde{O}(\sqrt{n}+\sqrt{m})$, where $\tilde{O}$ hides logarithmic factors. Since then, other quantum algorithms have been proposed that improve the classical computational complexity of many different string problems such as \emph{longest common substring}~\cite{LeGallSeddighin23,AkmalJin23}, \emph{longest palindrome substring}~\cite{LeGallSeddighin23}, \emph{minimal string rotation}~\cite{WangYing24,AkmalJin23}, \emph{longest square substring}~\cite{AkmalJin23}, \emph{longest common subsequence}~\cite{JinNogler24}, \emph{edit distance}~\cite{BoroujeniEGHS21,GibneyJKT24}, and \emph{multiple string matching}~\cite{QAC25}.

Nevertheless, not much have been explored in regimes where the input is a more general structure than just a string. For matching strings in labeled graphs, two different quantum approaches have been developed \cite{DGT22,EGM23}. One is tailored to non-sparse graphs, where it gives a better bound than the best algorithm in the classical setting, although no matching classical lower bound is known. The other is restricted to \emph{level-DAGs}, where the input graph is assumed to be a sequence of sets of nodes $V$, with edges $E$ only defined between nodes of consecutive sets, and nodes have character labels. In this setting, the proposed quantum algorithm achieves $O(|E|\sqrt{m})$ running time, improving over the best classical quadratic algorithm and thus overcoming the classical quadratic conditional lower-bound. Both of these approaches only work in special cases, and it is currently open how to cover the general case, and how to prove matching quantum lower bounds.

\begin{figure}[ht]
\centering
\resizebox{\textwidth}{!}{%
\begin{tikzpicture}[
    nuc/.style={font=\ttfamily},
    emph/.style={font=\ttfamily\bfseries\Large, underline}
]

\node (S0) {
\begin{tikzpicture}
\node[nuc] at (0,0)   {A};
\node[nuc] at (0.6,0) {C};
\node[nuc] at (1.2,0) {G};

\node[nuc] at (0,-0.6)   {T};
\node[nuc] at (0.6,-0.6) {A};
\node[nuc] at (1.2,-0.6) {A};

\node[nuc] at (0,-1.2)   {C};
\node[nuc] at (0.6,-1.2) {G};
\node[nuc] at (1.2,-1.2) {T};

\node[nuc] at (0,-1.8)   {G};
\node[nuc] at (0.6,-1.8) {T};
\node[nuc] at (1.2,-1.8) {A};
\end{tikzpicture}
};

\draw[decorate, decoration={brace, mirror, amplitude=6pt}]
  (S0.north west) -- (S0.south west);

\draw[decorate, decoration={brace, amplitude=6pt}]
  (S0.north east) -- (S0.south east);

\node[right=1cm of S0] (S1) {
\begin{tikzpicture}
\node[nuc] at (0,0)   {G};
\node[nuc] at (0.6,0) {A};
\node[nuc] at (1.2,0) {T};
\node[nuc] at (1.8,0) {C};

\node[nuc]  at (0,-0.6)   {C};
\node[nuc]  at (0.6,-0.6) {G};
\node[nuc] at (1.2,-0.6) {\underline{G}};
\node[nuc] at (1.8,-0.6) {\underline{T}};
\end{tikzpicture}
};

\draw[decorate, decoration={brace, mirror, amplitude=6pt}]
  (S1.north west) -- (S1.south west);

\draw[decorate, decoration={brace, amplitude=6pt}]
  (S1.north east) -- (S1.south east);

\node[right=1cm of S1] (S2) {
\begin{tikzpicture}
\node[nuc] at (0,0)   {A};
\node[nuc] at (0.6,0) {C};

\node[nuc] at (0,-0.6)   {\underline{G}};
\node[nuc] at (0.6,-0.6) {\underline{T}};

\node[nuc] at (0,-1.2)   {C};
\node[nuc] at (0.6,-1.2) {A};
\end{tikzpicture}
};

\draw[decorate, decoration={brace, mirror, amplitude=6pt}]
  (S2.north west) -- (S2.south west);

\draw[decorate, decoration={brace, amplitude=6pt}]
  (S2.north east) -- (S2.south east);

\node[right=1cm of S2] (S3) {
\begin{tikzpicture}
\node[nuc] at (0,0)   {\underline{T}};
\node[nuc] at (0.6,0) {\underline{A}};
\node[nuc] at (1.2,0) {\underline{A}};
\node[nuc]  at (1.8,0) {G};
\node[nuc]  at (2.4,0) {T};

\node[nuc] at (0,-0.6)   {A};
\node[nuc] at (0.6,-0.6) {T};
\node[nuc] at (1.2,-0.6) {G};
\node[nuc] at (1.8,-0.6) {C};
\node[nuc] at (2.4,-0.6) {A};
\end{tikzpicture}
};

\draw[decorate, decoration={brace, mirror, amplitude=6pt}]
  (S3.north west) -- (S3.south west);

\draw[decorate, decoration={brace, amplitude=6pt}]
  (S3.north east) -- (S3.south east);

\node[right=1cm of S3] (S4) {
\begin{tikzpicture}
\node[nuc] at (0,0)   {A};
\node[nuc] at (0.6,0) {C};
\node[nuc] at (1.2,0) {G};

\node[nuc] at (0,-0.6)   {T};
\node[nuc] at (0.6,-0.6) {T};
\node[nuc] at (1.2,-0.6) {A};
\end{tikzpicture}
};
\draw[decorate, decoration={brace, mirror, amplitude=6pt}]
  (S4.north west) -- (S4.south west);

\draw[decorate, decoration={brace, amplitude=6pt}]
  (S4.north east) -- (S4.south east);

\end{tikzpicture}
}
\caption{A GD string $T[1..5]$ with $T[1]=\{\mathtt{ACG,TAA,CGT,GTA}\}$, $T[2]=\{\mathtt{GATC,CGGT}\}$, $T[3]=\{\mathtt{AC,GT,CA}\}$, $T[4]=\{\mathtt{TAAGT,ATGCA}\}$, and $T[5]=\{\mathtt{ACG,TTA}\}$. Underlined characters illustrate a match for pattern $\mathtt{GTGTTAA}$.\label{fig:GD}}
\end{figure}

Between matching a string into another string and matching a string into a graph, there is a middle ground of well-studied structures. This is the case for \emph{generalized degenerate strings}. A generalized degenerate (GD) string is a sequence of sets of strings called segments, where strings of the same segment are of equal length, as Figure~\ref{fig:GD} illustrates. This definition can be made more restrictive by imposing constraints on the length of the segments (\emph{degenerate strings} if all segments have length $1$), or more loose by allowing strings in a segment to have different lengths (\emph{elastic degenerate strings}). There are lines of research exploring each one of these variants and more~\cite{IliopoulosMR08,GrossiILPPRRVV17,IliopoulosKP21,EquiNACTM21,RizzoENM24,AlzamelABGIPPR18,AlzamelABGIPPR20,MakinenCENT20}. An extensive summary of these approaches and novel techniques can be found in~\cite{Asconeetal24}. Despite this popularity in the string matching community, no quantum approach has been proposed so far for any of these structures, and we aim to start filling this gap. In this paper, we focus on solving string matching in generalized degenerate strings defined as follows:
\begin{problem}[String Matching in Generalized Degenerate Strings (\smgd)]
\label{problem:smgd}
\item{\textsc{input}:} A GD string $T$ and a pattern string $P$, both over an alphabet $\Sigma$.
\item{\textsc{output}:} \emph{True} if and only if there is at least one occurrence of $P$ in $T$.
\end{problem}
For a formal definition of an \emph{occurrence} of $P$ in $T$, see Section~\ref{subsec:gd-strings}.

\subsection*{Our Results}
We expand the scope of quantum techniques to variants of string matching by proposing the first quantum algorithm for Problem~\ref{problem:smgd}. The solution we propose can be seen as a generalization of the approach of Equi et al.~\cite{EGM23}. Our framework provides a different and more intuitive perspective for designing quantum algorithms for string matching in degenerate strings and graphs. The core idea is to generate a superposition over all shifts of the pattern, where each quantum thread tries to find a match for a different shift. In this light, the main design aspect consists in defining what a single thread is doing. As these can be interpreted as classical threads, possibly with quantum subroutines, this approach should make algorithm design accessible also to non-quantum specialists. To substantiate these claims, in Section~\ref{sec:rephrasing} we discuss how we can rephrase the previous result of Equi et al.~\cite{EGM23} in our framework.

We remark that this generalization is needed to solve quantum string matching in the case of GD strings, as a black-box application of the algorithm of Equi et al.~\cite{EGM23} would not lead to any quantum advantage. To see this, try to represent a GD string as a level-DAG: add edges between all pairs of strings from consecutive segments and split strings to paths of character-labeled nodes. The quantum algorithm of Equi et al.~\cite{EGM23} then applies to solve this GD string pattern matching problem, but the reduction is not efficient. For example, consider a GD string consisting of $N$ characters, with $O(\sqrt{N})$ strings of logarithmic length per segment, and $O(\sqrt{N})$ segments. If we turn such a GD string into a level-DAG, every pair of consecutive segments creates $O(N)$ edges, for a total of $|E|=\Omega(N\sqrt{N})$ edges. This yields an $O(N\sqrt{Nm})$-time quantum algorithm, where $m$ is the length of the pattern. However, the problem can be solved faster in the classical setting in $O(nm+N)$ time~\cite[Theorem 3]{Asconeetal24} for a generalized degenerate string of $n$ segments and $N$ total characters. Taking inspiration from both these quantum~\cite{EGM23} and classical~\cite{Asconeetal24} techniques, we devise a parallel algorithm and then we cast it in the quantum setting, achieving the following.
\begin{theorem}
    \label{thm:main-quantum-algo}
    There exists a quantum algorithm that solves \smgd on a pattern string $P$ of length $m$ and a generalized degenerate string $T$ of $n$ segments and $N$ total characters in $\tilde{O}\left(\sqrt{mnN}\right)$ time, with success probability $p>2/3$. The location of one match can be reported in additional $O(\sqrt{nN})$ time.
\end{theorem}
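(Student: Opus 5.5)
We prove the theorem by designing a parallel algorithm with one thread per ``shift'' and then replacing the outer loop over shifts and the inner checks by Grover search. Throughout, $T[1..n]$ has segments $T[j]$ with $n_j=|T[j]|$ strings, each of length $\ell_j$, and $N=\sum_j n_j\ell_j$.

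\textbf{Classifying occurrences.} Every occurrence of $P$ falls into one of two kinds:
\begin{itemize}
\item[(a)] it lies entirely inside one string of some segment;
\item[(b)] it spans segments $j,j+1,\dots,k$ with $j<k$.
\end{itemize}
A type (b) occurrence uses a suffix of some $s\in T[j]$, full strings of $T[j+1..k-1]$, and a prefix of some $t\in T[k]$. It is determined by the starting segment $j$ and the length $i$ of the pattern prefix matched as a suffix of $s$, where $1\le i\le \min(m,\ell_j)$. Once $(j,i)$ is fixed, the segment boundaries along the pattern are fixed: $P$ is cut into consecutive pieces of lengths $i,\ell_{j+1},\ell_{j+2},\dots$, ending with a prefix piece. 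So the occurrence exists iff
\begin{itemize}
\item some string of $T[j]$ has $P[1..i]$ as a suffix,
\item each intermediate piece equals some string of the corresponding segment, and
\item the last piece is a prefix of some string of $T[k]$.
\end{itemize}
This is the ``shift'' $(j,i)$ a thread handles.

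\textbf{Checking one shift.} For a fixed shift, each piece check over segment $h$ asks whether some $s\in T[h]$ matches a given substring of $P$. We decide this by Grover search over the $n_h$ strings, with each string comparison done by a nested Grover search over its $\le\ell_h$ positions. This costs $\tilde O(\sqrt{n_h\ell_h})$, using variable-time/nested amplitude amplification to control the error.

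The number of segments $k-j+1$ traversed can be large, so the segment checks must not be done sequentially at full cost. Instead we treat the conjunction ``all pieces match'' as an OR over pairs $(\text{shift},\text{failing piece})$ and search for a shift with no failing piece. Concretely, a thread for shift $(j,i)$ searches for a segment $h$ whose piece fails. By Cauchy--Schwarz over the segments, the check costs
\[
\tilde O\Bigl(\sqrt{\textstyle\sum_h n_h\ell_h}\Bigr)=\tilde O(\sqrt N),
\]
where checking a single segment $h$ costs $\sqrt{n_h\ell_h}$ and we use the variable-time search of Ambainis.

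\textbf{Counting shifts and the outer search.} The number of type (b) shifts is $\sum_j\min(m,\ell_j)\le\min(mn,N)$. Grover over shifts then gives total time
\[
\tilde O\bigl(\sqrt{mn}\cdot\sqrt N\bigr)=\tilde O(\sqrt{mnN}).
\]
Type (a) occurrences are checked separately. We run Grover over pairs (string, start offset), at most $N$ of them, each verified in $\tilde O(\sqrt m)$, which costs $\tilde O(\sqrt{Nm})$. This is within budget since $n\ge 1$. Success probability above $2/3$ follows by standard amplification with logarithmic repetition, absorbed in $\tilde O$.

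\textbf{Reporting a match.} Once a good shift $(j,i)$ is found, we recover the witness strings, one per segment $h$, by a Grover search in $T[h]$ with comparison cost $\sqrt{\ell_h}$. The total is
\[
\sum_h \sqrt{n_h\ell_h}\le\sqrt{n\sum_h n_h\ell_h}=\sqrt{nN}
\]
by Cauchy--Schwarz, which gives the additional $O(\sqrt{nN})$.

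\textbf{Main obstacle.} The hard step is bounding the per-shift check by $\tilde O(\sqrt N)$ rather than $\sum_h\sqrt{n_h\ell_h}$, which could be as large as $\sqrt{nN}$. A naive bound would give $\sqrt{mn}\cdot\sqrt{nN}$, which is too much. The natural tool is variable-time search over segments, giving cost
\[
\sqrt{\textstyle\sum_h(\sqrt{n_h\ell_h})^2}=\sqrt N.
\]
Care is needed to handle the bounded-error nested subroutines, and to compute piece boundaries in $P$ quickly from prefix sums of the $\ell_h$, which are precomputable in $O(n)$ classical time. If $O(n)$ preprocessing is disallowed, the boundaries would instead be obtained via quantum access to precomputed offsets.
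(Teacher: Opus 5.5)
Your proof is correct, but it takes a different route from the paper's. The paper uses only $m$ threads, one per residue of the starting column modulo $m$. Each thread makes a single left-to-right pass over all $n$ segments, carrying an ``active prefix'' bit from one segment to the next. In segment $i$ it decides the full, forward and backward matches by plain nested Grover in $\tilde O(\sqrt{|T[i]|k_i})$. So one thread costs $\sum_i\sqrt{|T[i]|k_i}\le\sqrt{nN}$ by Cauchy--Schwarz, and the outer search contributes $\sqrt m$.

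You instead take one shift per starting column of a cross-segment occurrence, up to $\min(mn,N)$ of them. You check each shift as an AND over only the segments its window touches, and pay $\sqrt{mn}$ outside and $\tilde O(\sqrt N)$ inside.

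The paper's route needs nothing beyond nested amplitude amplification with two-sided error, and no preprocessing. Its sequential scan is what lets one thread cover all the disjoint windows of its residue class in a single pass. Your route needs a variable-time search over the segments of a window, plus $O(n)$ preprocessing of prefix sums, which is affordable since $n\le\sqrt{nN}$. That step is not ``Cauchy--Schwarz over the segments'' as you wrote. Because the per-segment costs are known in advance, you can still get it with plain Grover: bucket the window's segments by cost and apply Cauchy--Schwarz over the $O(\log N)$ buckets.

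In exchange, your decomposition is more modular and exposes slack in the bound. The estimates ``at most $mn$ shifts'' and ``at most $\tilde O(\sqrt N)$ per shift'' are not tight simultaneously. Shift $s$ really costs $\tilde O(\sqrt{C_s})$ with $C_s=\sum_{h\in\text{window}(s)}n_h\ell_h$. Every segment lies in fewer than $2m$ windows, so $\sum_s C_s<2mN$. A variable-time search at the outer level as well would therefore give $\tilde O(\sqrt{mN})$, a $\sqrt n$ improvement over the stated bound, modulo the usual technicalities of variable-time search with bounded-error subroutines. For degenerate strings over a constant alphabet, for instance, this is just nested Grover in $\tilde O(\sqrt{mn})$, versus the $\tilde O(n\sqrt m)$ given by the paper's bound.
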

We remind that the $\tilde{O}$ notation hides factors of complexity $O(n^{o(1)})$, which in our case means any (poly)logarithmic factor.

\subsection*{Technical Overview}

We first describe a classical parallel approach in Section~\ref{sec:main-classical-algo}, and then show how this translates to a quantum algorithm in Section~\ref{sec:main-quantum-algo}. To understand our approach, consider this very simple naive classical sequential solution to \smgd. Given pattern $P$ of length $m$ and GD string $T$ of $n$ segments and $N$ total characters, start from the leftmost position in $T$, called column (formally defined in Section~\ref{sec:preliminaries}), and try to match $P$. Consume characters from $P$ and compare them to $T$ column-by-column (using tries, see Section~\ref{sec:main-classical-algo}) checking whether a character-match can be found or not. Even if there is no character-match, do not stop, but continue consuming one character in $P$ for each column in $T$ until there are no more characters in $P$. At this point, start again from the beginning of $P$, and keep consuming its characters in the same manner until either a full pattern match is found or we reach the end of $T$. If no pattern match is found at this point, restart the procedure, this time starting from the second column of $T$. Keep restarting the procedure, each time shifting the starting position one column to the right, until a match is found or $m$ instances of the procedure have been run. If no match has been found until this point, we can safely stop and report that there is no match, because the subsequent instances of our procedure would start from a column already checked by a previous instance.

The classical parallel algorithm, illustrated in Fig.~\ref{fig:shifted_threads} on page \pageref{fig:shifted_threads}, follows exactly this logic, with the difference that it launches $m$ threads to run the $m$ instances of the above procedure in parallel. In a sense, this is similar to the approach of Ascone et al.~\cite[Theorem 3]{Asconeetal24}, but there the authors use a bit-vector to keep track of the active prefixes between two segments, while here we parallelize over all possible shifts of the pattern against the GD string.

The quantum algorithm replaces the threads with a superposition and, instead of the tries, it finds matches of substrings of the pattern in a segment employing two nested Grover's searches. Using tries in the quantum algorithm can lead to comparable performances when scanning the GD strings, but using Grover's searches avoids any preprocessing, saving an additive $O(N)$ term in the final complexity.

We remark that the techniques in this paper generalize the approach of Equi et al.~\cite{EGM23}, which uses quantum parallelism to simulate a bit-parallel classical algorithm, and thus it is clearly a special case of a multi-threaded algorithm. In the GD-string setting, fully exploiting quantum speed-ups seems infeasible when relying only on a pure bit-parallel abstraction. Conversely, the multi-threaded abstraction where each thread takes care of one shift of the pattern can describe both our algorithm and the algorithm by Equi et al.~\cite{EGM23}.

We also acknowledge a difficulty in finding matching quantum lower bounds for our solution. This is somewhat expected, as current lower bound strategies~\cite{BuhrmanPS21,DGT22} struggle to provide lower bounds for the quantum complexity of finding an exact match for a string in a graph, and that is clearly a harder problem than \smgd. We note however that quantum conditional lower bounds exists for string problems such as edit distance~\cite{BuhrmanPS21}, longest common~\cite{BuhrmanPS21} subsequence and approximate string matching in labeled graphs~\cite{DGT22}, suggesting that beating time complexity $O(n^{1.5})$ is unlikely. For \smgd when $N$, $n$ and $m$ are comparable, our solution features a time complexity of $O(\sqrt{nmN})=O(n^{1.5})$ as well, thus we find it not unreasonable to think that similar lower bounds could exist for this case. 

Finally, we find that, as our algorithm is based on the property of covering the GD string with shifts of the pattern, our techniques could be of independent, even non-quantum, interest.

\section{Preliminaries}
\label{sec:preliminaries}

\subsection{Generalized Degenerate Strings}
\label{subsec:gd-strings}
An \emph{alphabet} $\Sigma$ is a set of \emph{characters}. A sequence $P\in \Sigma^m$ is called a \emph{string} and its length is denoted $m=|P|$. We denote integers $i,i+1,\ldots,j$ as interval $[i..j]$ and represent a string $P$ as an array $P[1..m]$, where $P[i]\in \Sigma$ for $1\leq i\leq m$. String $P[i..j]$ is called a \emph{substring}, string $P[1..i]$ a \emph{prefix}, and string $P[i..m]$ a \emph{suffix} of $P[1..m]$.  

A \emph{generalized degenerate (GD) string} $T[1..n]$ is a sequence $T[1] T[2] \cdots T[n]$ of non-empty sets of fixed length strings, that is, each $T[i] \subseteq \Sigma^{k_i}$, where $k_i$ is a positive integer and $|T[i]|>0$ for all $i$. We denote the \emph{size} of $T$ as the total number of characters $N=\sum_{i=1}^n \sum_{S\in T[i]} |S|$. Moreover, $W=\sum_{i=1}^{n}{k_i}$ is the \emph{width} of $T$. The \emph{language} of $T$ is the set of strings $\{S_1S_2\cdots S_n \mid S_1 \in T[1], S_2 \in T[2], \ldots, S_n \in T[n]\}$.

In this work, we study the problem of \emph{string matching in generalized degenerate strings} (\smgd, Problem~\ref{problem:smgd} in Section~\ref{sec:intro}), which consists in finding a match for a pattern string $P[1..m]$ in a generalized degenerate string $T[1..n]$, where $P$ has a \emph{match} in $T$ if i) $P$ is a substring of any $S \in T[i]$ for any $i$, or ii) there is a sequence of strings $S_i,S_{i+1},\ldots,S_{j-1},S_{j}$ such that $S_i \in T[i],S_{i+1}\in T[i+1],\ldots,S_{j-1} \in T[j-1],S_j \in T[j]$ and $P = S_i[a..k_i] S_{i+1} \cdots S_{j-1} S_{j}[1..b]$ for some integers $a,b>0$. We then can say that a match starts at column $c_1$ and ends at column $c_2$, where $c_1=a+\sum_{x=1}^{i-1}k_x$ and  $c_2=b+\sum_{x=1}^{j-1}k_x=a+m-1$, where $m$ is the length of $P$. For example, in Figure~\ref{fig:GD} $S_2=\mathtt{CGGT}$, $S_3=\mathtt{GT}$, $S_4=\mathtt{TAAGT}$, and $P=\mathtt{GTGTTAA}=S_2[3..4]S_3 S_4[1..3]$, starting at column $6$ and ending at column $12$.

We first propose a classical algorithm that uses a \emph{forward trie} and a \emph{backward trie} for each $T[i]$. The forward trie is a tree on strings in $T[i]$ such that each string $S$ of $T[i]$ corresponds to a distinct leaf $v_S$ and one can follow character-labeled edges from the root to the leaf $v_S$ to spell $S$. The backward trie is the forward trie for the set of strings formed by reversing the strings in $T[i]$. The reverse of string $S$ is $S^r=S[k]S[k-1]\cdots S[1]$, where $k=|S|$.

\subsection{Quantum computation}
In what follows, we introduce our quantum notation, but we assume the reader is familiar with the basic notions of quantum computing as covered in textbooks \cite{NC10QCQI}. In quantum computation, the state of a qubit is described by a vector $\ket{\psi}=\alpha\ket{0}+\beta\ket{1}$, that is a \textit{superposition} of vectors $\ket{0}=(1,0)$ and $\ket{1}=(0,1)$, where $\alpha,\beta \in \mathbb{C}$ are \textit{amplitudes} satisfying $|\alpha|^2+|\beta|^2=1$. Vectors $\ket{0}$ and $\ket{1}$ form the so-called computational basis, which spans a two-dimensional Hilbert space $\HH$ of single-qubit states. The tensor product of two quantum states $\ket{\psi}$ and $\ket{\phi}$ can be written as $\ket{\psi}\otimes\ket{\phi}$ or $\ket{\psi\phi}$ and it is itself a quantum state. In particular, we can take the tensor product of multiple computational basis vectors and obtain the state $\ket{i}=\bigotimes_{k=1}^i\ket{b_k}$, where $b_k$ is the $k$-th bit of the binary representation of $i$. For example, $\ket{5}=\ket{101}=\ket{1}\otimes\ket{0}\otimes\ket{1}$. Thus, the set of vectors $\{\ket{i}\,|\, 0\leq i\leq n-1\}$ forms the computational basis of a $2^n$-dimensional Hilbert space $\HH^{\otimes n}$. The quantum state of multiple qubits can then be represented as a vector in this Hilbert space, and can be written as $\ket{\psi} = \sum_{i=1}^n\alpha_i\ket{i}$, where the square of the amplitudes $\alpha_i \in \mathbb{C}$ is normalized as $\sum_{i=1}^n|\alpha_i|^2=1$. When measuring state $\ket{\psi}$ in the computational basis, the result is $i$ with probability $|\alpha_i|^2$ and the state collapses to $\ket{i}$. A state $\ket{\psi}\in \HH^{\otimes n}$ is called \textit{separable} with respect to the partition $\HH^{\otimes n_1}\otimes \HH^{\otimes n_2}$ if it can be written as the tensor product $\ket{\psi}=\ket{\psi_1}\ket{\psi_2}$, where $\ket{\psi_1}\in \HH^{\otimes n_1}$,  $\ket{\psi_2}\in \HH^{\otimes n_2}$ and $n=n_1+n_2$. A state $\ket{\psi}\in \HH^{\otimes n}$ that is not separable under any partition is called \textit{entangled}. Any unitary transformation $U\in \HH^{2^n\times 2^n}$ acting on $\HH^{\otimes n}$ maps a quantum state $\ket{\psi}\in \HH^{\otimes n}$ to a new quantum state $U\ket{\psi}\in \HH^{\otimes n}$. Some unitary transformations that operate on one or two qubits are called \textit{gates}, and the application of multiple gates is called a \textit{quantum circuit}. 

We adopt the \textbf{quantum query model} of computation based on \textbf{quantum RAM} (QRAM) with quantum registers of size $O(\log n)$ (Word-QRAM).
In this model, we can apply the transformation
\[
    \sum_{i=1}^n\alpha_i\ket{i}\ket{x} \rightarrow \sum_{i=1}^n\alpha_i\ket{i}\ket{x+A[i]}
\]
at unit cost, where $i$, $x$ and $A[i]$, the $i$-th element of array $A$, are stored in registers of up to $O(\log n)$ qubits. When $x=0$, this can be phrased as accessing the elements of $A$ in superposition. Our complexities will then measure the number of queries to the quantum RAM performed by our algorithm. Since we will always perform a constant number of additional operations (e.g. single-character comparison) per queried item, the \textbf{gate complexity} of our algorithm will incur at most in an overhead of logarithmic factors, which could be anyway hidden in the $\tilde{O}$ notation.

The final key ingredient of our algorithms is Grover's search~\cite{Grover96} and its generalization in the framework of amplitude amplification~\cite{Brassard2002}.
\begin{theorem}[\cite{Brassard2002, Grover96}]
\label{thm:amplitude-amplification}
Let $\mathcal{A}$ be a quantum algorithm with no measurement, such that $\mathcal{A}|0\rangle = \sqrt{1-a}|\psi_0\rangle + \sqrt{a}|\psi_1\rangle$, where $|\psi_1\rangle$ denotes a superposition of target states, $|\psi_0\rangle$ is a superposition of the non-target states and $a$ is the success probability ($0 < a \le 1$). 
There exists a quantum algorithm that finds a target state with probability at least $\max(1-a, a)$ using $O(1/\sqrt{a})$ applications of $\mathcal{A}$ and $\mathcal{A}^{-1}$.
\end{theorem}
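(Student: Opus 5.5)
The plan is to follow the standard amplitude amplification analysis of Brassard et al., reducing everything to a rotation in a two-dimensional real subspace. First I would fix the two reflections that make up the iterate. Let $S_\chi$ be the unitary that flips the sign of target basis states, so $S_\chi\ket{\psi_1}=-\ket{\psi_1}$ and $S_\chi\ket{\psi_0}=\ket{\psi_0}$; it can be implemented with one call to whatever predicate defines ``target'' (in our applications, a character comparison). Let $S_0$ flip the sign of $\ket{0}$ only. The Grover iterate is then $Q=-\mathcal{A}S_0\mathcal{A}^{-1}S_\chi$. Each application of $Q$ costs one call to $\mathcal{A}$ and one call to $\mathcal{A}^{-1}$, plus cheap reflections.

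Next I would write $\sqrt{a}=\sin\theta$ with $\theta\in(0,\pi/2]$, so that $\mathcal{A}\ket{0}=\cos\theta\ket{\psi_0}+\sin\theta\ket{\psi_1}$. I then check that the real span of $\ket{\psi_0},\ket{\psi_1}$ is invariant under $Q$. The map $-\mathcal{A}S_0\mathcal{A}^{-1}$ equals $2\ket{\Psi}\bra{\Psi}-I$ with $\ket{\Psi}=\mathcal{A}\ket{0}$, which is a reflection about $\ket{\Psi}$. The map $S_\chi$ restricted to this plane is a reflection about $\ket{\psi_0}$. The product of two reflections whose axes meet at angle $\theta$ is a rotation by $2\theta$. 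Hence
\[
Q^k\mathcal{A}\ket{0}=\cos((2k+1)\theta)\ket{\psi_0}+\sin((2k+1)\theta)\ket{\psi_1},
\]
which I would verify directly by induction on $k$. Measuring after $k$ iterations yields a target state with probability $\sin^2((2k+1)\theta)$.

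Finally I would choose $k$ by cases.
\begin{itemize}
\item If $a\ge 1/2$, take $k=0$, i.e.\ just run $\mathcal{A}$ and measure. This succeeds with probability $a=\max(1-a,a)$.
\item If $a<1/2$, take $k=\lfloor \pi/(4\theta)\rfloor$. Then $(2k+1)\theta$ lies within $\theta$ of $\pi/2$, so the success probability is at least $\cos^2\theta=1-a$. Since $\theta\ge\sin\theta=\sqrt a$, we have $k=O(1/\theta)=O(1/\sqrt a)$.
\end{itemize}
This gives the claimed bound of $O(1/\sqrt a)$ applications of $\mathcal{A}$ and $\mathcal{A}^{-1}$.

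The main obstacle is that this choice of $k$ requires knowing $a$, or at least a good estimate of it. When $a$ is unknown, I would use the exponential-search variant of Boyer, Brassard, H{\o}yer and Tapp. Run $Q^j\mathcal{A}$ with $j$ chosen uniformly at random in $[0,M)$, and grow $M$ geometrically by a factor $6/5$ until a target is observed. The averaging identity $\frac1M\sum_{j<M}\sin^2((2j+1)\theta)\ge 1/4$ for $M\ge 1/\sin 2\theta$ gives constant success probability, still with $O(1/\sqrt a)$ expected applications. Repeating a constant number of times and checking the output with the target predicate then boosts this to any constant, which is all the later quantum algorithm for Theorem~\ref{thm:main-quantum-algo} needs.
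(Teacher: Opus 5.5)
Your proof is correct. It is the standard two-reflection rotation argument of Brassard, H{\o}yer, Mosca and Tapp: their fixed-iteration amplitude amplification theorem with $k=\lfloor \pi/(4\theta)\rfloor$, which the paper does not reprove but simply cites. Your final paragraph on unknown $a$ is not needed for the statement as written, whose $\max(1-a,a)$ guarantee is for the $a$-dependent iteration count; as a sketch of exponential search it is fine, except that the threshold $M\ge 1/\sin 2\theta$ is not $O(1/\sqrt{a})$ when $a$ is close to $1$, which is why that case is usually handled separately by a plain run of $\mathcal{A}$.
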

If $\mathcal{A}$ is a classical function, this corresponds to standard Grover's search. Since $\mathcal{A}$ can also be a Grover's search itself, this result allows us to nest Grover's searches one into the other to achieve better speed-ups. In our context, the success probability $a$ is the ratio between the number of solutions over the number of possibilities. For example, if we compare two strings both of length $n$ and we look for a single-character mismatch, then $a=\frac mn$, where $m$ is the number of single-character mismatches. Then, the complexity of finding one mismatch is $O\left(\frac{1}{\sqrt{a}}\right)=O\left(\sqrt{\frac nm}\right)$. This goes to $O\left(\sqrt{n}\right)$ in the worst case, that is when $m=1$. When the error probability of the oracle function is two-sided, that is both false positives and false negatives are possible, this framework can still be used with at most a $O(\log n)$-factor overhead~\cite{HMDeW03}.

\section{Classical Parallel Algorithm}
\label{sec:main-classical-algo}
We first give a high-level idea of a classical parallel algorithm finding a match for a pattern string $P$ in a GD string $T$. This helps us set up the intuition for the quantum algorithm. The strategy is to use $m=|P|$ threads $t_1,\ldots,t_m$ to compute information about the prefixes of pattern $P$ while we scan GD string $T$, using the forward and backward tries of the segments of $T$. The purpose of this algorithm is not to be efficient, rather provide a framework to better interpret the quantum algorithm. Moreover, for the sake of exposition, we assume that $k_i<m$ for every $1\leq i\leq n$, both here and in the quantum algorithm. This implies that $P$ does not occur as a substring of any string in $T$. We will drop this assumption in Section~\ref{sec:dropping-assumptions} by designing a custom quantum algorithm for finding such occurrences, and showing that the correctness of our main algorithm is not affected. These two cases could be covered in one single algorithm, but we find that both the explanation and the analysis are much cleaner if they are accounted for separately.

\subsection*{Algorithm idea}
The parallel classical algorithm proceeds as follows. After instantiating the $m$ threads, we start scanning $T$ from left to right, segment by segment, in a for-loop of $n$ iterations. At first, let us focus on what thread $t_1$ does. Starting from the first segment $T[1]$ at iteration $1$, thread $t_1$ checks whether $P[1..k_1]\in T[1]$ using the forward trie of $T[1]$. At iteration $2$, $t_1$ checks whether $P[k_1+1..k_1+k_2]\in T[2]$, assuming $k_1+k_2<m$. At some later iteration $i$, $t_1$ will reach the end of $P$, and it will try to match a suffix of $P$ as a prefix of a string in $T[i]$, ending at a certain column $c$. To start checking a match from column $c+1$, $t_1$ tries to match a prefix of $P$ as a suffix of a string in $T[i]$ using the backward trie of $T[i]$. Then, $t_1$ proceeds as before and continues to try to match $P$ in $T$ in this manner until reaching the end of $T$. This means that, after scanning $T$, $t_1$ can tell whether there is a match for $P$ in $T$ starting from some column $1+m\cdot r$, for some integer $r$ such that $1+m\cdot r<W$. Any other generic thread $t_h$ does the same as $t_1$, but it starts with a shift of $h$ columns. In other words, a generic thread $t_h$ can tell whether there is a match for $P$ in $T$ starting from some column $h+m\cdot r$. As depicted in Figure~\ref{fig:shifted_threads}, it is now straightforward to see that, if there is a match for $P$ in $T$, there will be a thread able to find it, because $h$ ranges from $1$ to $m$, thus for every column there's always a thread trying to start a match from that column.
\begin{figure}
    \centering
    \includegraphics[width=1\linewidth]{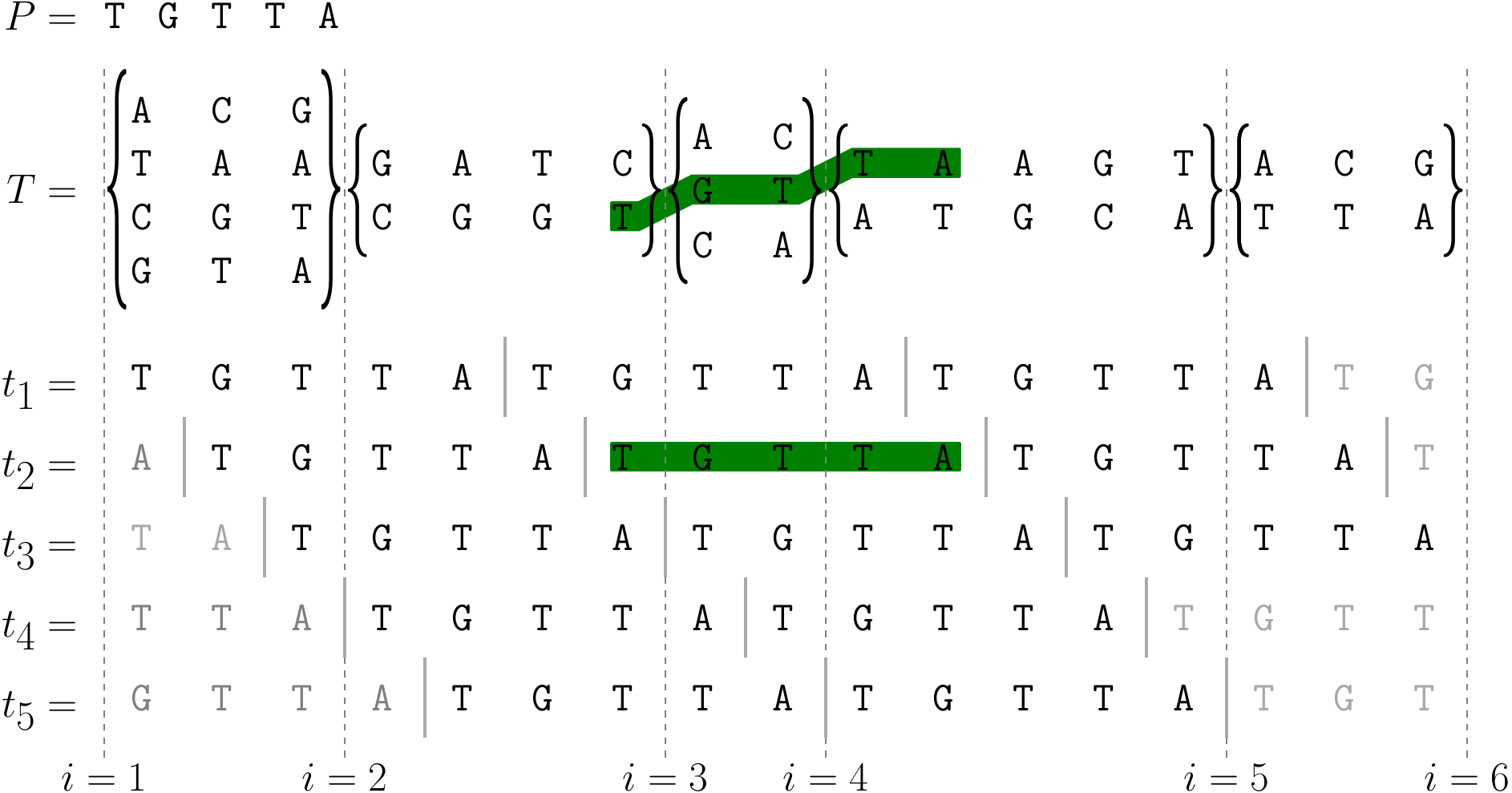}
    \caption{Abstract representation of different threads trying to match pattern $P$ in GD string $T$ starting from different position. Each dash symbol represents a single character, thus $|P|$ has $m=5$ characters and $T$ has $N=52$. Each thread $t_h$ tries to match $P$ column by column, with a shift of $h-1$ positions w.r.t. thread $t_1$, namely $t_1$ is shifted by $0$ positions and $t_5$ is shifted by $2$ positions. The characters highlighted in green show that thread $t_2$ finds a match at position $h + r\cdot m=2+1\cdot 5=7$. The grayed-out characters represents comparisons that will be tested but that cannot become full matches. Variable $i$ counts the iterations of the main for-loop.}
    \label{fig:shifted_threads}
\end{figure}

\section{Quantum Algorithm}
\label{sec:main-quantum-algo}
The quantum algorithm simulates the parallel one by replacing the $m$ threads with a superposition of $m$ states, over which we will run a Grover's search. We recall that here we assume that $k_i<m$ for every $1\leq i\leq n$, and we show how to drop this assumption in Section~\ref{sec:dropping-assumptions}. The oracle function of the Grover's search tells whether a specific shift of the pattern allows to find a match. We first present the quantum algorithm with a classical implementation of the oracle function, that is each quantum thread of the oracle function performs the same computation of the corresponding classical thread of the parallel algorithm. We later improve this result by replacing the use of tries with nested Grover's searches.

\subsection{Simplified quantum algorithm}
\label{sec:simplified-quantum-algo}
The parallel algorithm can easily be turned into a quantum algorithm. We start by preparing superposition $\frac{1}{\sqrt{m}}\sum_{h'=0}^{m-1}\ket{h'}\ket{0\cdots 0}$, where the first register is the one over which we run the Grover's search, while the rest is all the auxiliary qubits we will need to implement the oracle function. The oracle function does exactly the same as the classical threads: given the values $h'$ that identify the shift of the pattern, it scans GD string $T$ from left to right in a for loop of $n$ iterations. For ease of exposition, we use values $h=h'+1$ as thread identifiers. At each iteration $i$, substring $P[j_h..j_h+k_i-1]$ could match in segment $T[i]$, and whether this is the case can be checked by using the forward trie of $T[i]$. Index $j_h=1+((\sum_{j=1}^{i-1} k_j)-h+1\text{ mod } m)$ depends on the specific thread, and the details of this calculation are given in Lemma~\ref{lemma:invariant}, Section~\ref{sec:analysis}. It can be the case that $j_h+k_i-1$ exceeds the length of the pattern, that is we have to stop early our visit of the forward trie because we have characters to match only a suffix of $P$ in $T[i]$. In this case, we also use the backward trie of $T[i]$ to check whether a prefix of $P$ can start a match in $T[i]$. The oracle function implemented in this way scans $T$ from left to right taking time $O(k_i)$ for each individual segment thanks to the tries, resulting in an overall time complexity of $O(W)$. Since the size of our superposition is $O(m)$, that is we have $m$ quantum threads, using Theorem~\ref{thm:amplitude-amplification} our algorithm has a total time complexity of $O(W\sqrt{m})$ and constant success probability. Adding the preprocessing time $O(N)$ needed to compute the tries, we obtain the following intermediate result.
\begin{theorem}
    \label{thm:simple-quantum-algo}
    There exists a quantum algorithm that solves \smgd on a pattern string $P$ of length $m$ and a generalized degenerate string $T$ of $n$ segments and width $W$ in $\tilde{O}\left(N+W\sqrt{m}\right)$ time, with success probability $p>2/3$.
\end{theorem}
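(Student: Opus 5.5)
The proof of Theorem~\ref{thm:simple-quantum-algo} has three parts: classical preprocessing, a reversible per-thread oracle, and amplitude amplification via Theorem~\ref{thm:amplitude-amplification}.

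\textbf{Preprocessing.} We build the forward and backward tries of every segment $T[i]$ in total $O(N)$ time. We store them in QRAM as child-pointer arrays indexed by (node, character), so a single transition is one query. With a general alphabet we use perfect hashing or sorted children, which costs only a logarithmic factor absorbed in $\tilde{O}$.

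\textbf{Oracle definition.} The Boolean function $f(h)$ is true iff thread $t_h$ certifies a full occurrence of $P$ starting at some column $h+rm$. For a fixed $h$ the thread runs deterministically. At iteration $i$ it computes $j_h=1+\big((\sum_{x<i}k_x)-h+1 \bmod m\big)$ and then does one of two things.
\begin{itemize}
\item If $j_h+k_i-1\le m$, it walks $P[j_h..j_h+k_i-1]$ in the forward trie of $T[i]$.
\item Otherwise, it walks the suffix $P[j_h..m]$ in the forward trie, and the prefix $P[1..k_i-(m-j_h+1)]$ reversed in the backward trie.
\end{itemize}

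The thread keeps a flag for whether the current copy of $P$ has matched so far. The flag is reset at each wrap-around. If the flag is still set when a copy ends (and that copy began at some column), $f(h)$ is set to true.

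\textbf{Correctness.} This is the main step. Suppose an occurrence starts at column $c_1$ with $a$, $i$, $j$ as in the definition. Take $h\equiv c_1 \pmod m$ with $h\in[1..m]$. I would prove the invariant, presumably the content of Lemma~\ref{lemma:invariant}, that at iteration $i'$ thread $t_h$ aligns $P[j_h]$ with the first column of $T[i']$. Under the assumption $k_i<m$, an occurrence is split as $S_i[a..k_i]S_{i+1}\cdots S_j[1..b]$. The backward trie then certifies that the prefix $P[1..k_i-a+1]$ is a suffix of some $S_i$, forward tries certify each middle segment completely, and the forward trie certifies the final part as a prefix of $S_j$.

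Each segment's check is independent of which string was chosen in the neighbouring segments. That independence is exactly what makes the per-segment membership tests sufficient, since the language is a Cartesian product. Conversely, any accepting run exhibits strings $S_{i'}$ that realise an occurrence, so there are no false positives.

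One subtlety needs care: a single segment may contain both the end of one copy and the start of the next. Because $k_i<m$, at most one wrap-around happens per segment, which is why two trie walks suffice.

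\textbf{Quantum implementation and cost.} Each iteration costs $O(k_i)$ QRAM queries, so one evaluation of $f$ costs $O(W)$. We implement it reversibly with $O(W)$ ancillas, or with $O(\log)$ ancillas plus uncomputation, which doubles the cost. We then prepare $\frac{1}{\sqrt m}\sum_{h'}\ket{h'}$ and apply Theorem~\ref{thm:amplitude-amplification} with $a\ge 1/m$ whenever a match exists. This takes $O(\sqrt m)$ oracle calls when the number of marked shifts is unknown, using the standard exponential-search variant and repeating a constant number of times to exceed $2/3$. The total is $O(N+W\sqrt m)$.

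The main obstacle is the invariant for $j_h$ together with the wrap-around bookkeeping. Everything else is routine once that is established.
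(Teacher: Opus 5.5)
Your proposal is correct and follows essentially the paper's route. The ingredients match: $O(N)$ trie preprocessing, and a per-shift oracle that walks the forward trie of each $T[i]$, plus the backward trie at a wrap-around, for $O(k_i)$ per segment and $O(W)$ per evaluation. Correctness comes from the alignment invariant of Lemma~\ref{lemma:invariant} under the standing assumption $k_i<m$, and amplitude amplification over the $m$ shifts gives $O(N+W\sqrt{m})$ in total. Your explicit bookkeeping of which copies count (the partial initial copy for $h\ge 2$ versus the copy starting at column $1$ for $h=1$) and of the unknown number of marked shifts is, if anything, more careful than the paper's sketch.
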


\subsection{Final Quantum Algorithm}
Our final quantum algorithm combines the strategy described above and nested Grover's searches to achieve the best performance. First, let us introduce a few useful definitions. Looking at the parallel and quantum algorithms proposed so far we see that we need to compute the following Boolean functions:
\begin{itemize}
    \item $\shiftm(h)=1$ if and only if $P$ has a match starting from some column $h+r\cdot m$ in $T$, for some integer $r$;
    \item $\fullm(s,i)=1$ if and only if string $s$ matches a string in segment $T[i]$, that is $s \in T[i]$.
    \item $\forwm(s,i)=1$ if and only if string $s$ matches the prefix of a string in segment $T[i]$.
    \item $\backm(s,i)=1$ if and only if string $s$ matches the suffix of a string in segment $T[i]$.
\end{itemize}
In the parallel implementation of the algorithm, the threads compute function $\shiftm$ by scanning $T$ from left to right, and so does the quantum algorithm, where $\shiftm$ is the oracle function for the Grover's search. To compute $\shiftm$, we need to compute $\fullm$, $\forwm$ and $\backm$ as subroutines for each segment $T[i]$, and so far we used tries to achieve this task.

We now show how the quantum algorithm can do without tries by the means of Grover's search. We present the procedure for computing function $\fullm(s,i)$. The same logic applies for computing $\forwm(s,i)$ and $\backm(s,i)$ by simply ignoring the unnecessary part of a segment. We start by first recalling that a simple way to check whether two strings $A$ and $B$ of the same length are equal~\cite{RV03} is to run a Grover's search over the positions of $A$ and $B$, with the oracle function marking single-character mismatches. 
We can use this as a subroutine in computing $\fullm(s,i)$. Given string $s$ and segment $T[i]$, we run a Grover's search over all strings in $T[i]$, where the oracle function checks whether two strings are the same. Using the amplitude amplification framework, we can then implement the oracle function with a Grover's search that finds single-character mismatches between $s$ and a string in $T[i]$. To achieve our final quantum algorithm, consider the simple quantum algorithm described in Section~\ref{sec:simplified-quantum-algo}. We replace the use of tries for computing $\fullm$, $\forwm$ and $\backm$ with $O(\log n)$ executions per segment of the nested Grover's searches just described. The reason of the $O(\log n)$ number of executions is to guarantee to bound the failure probability by a constant, as we need to run $n$ Grover's searches, one per segment, to compute function $\shiftm$. The details of the analysis are given in Section~\ref{sec:analysis} in the proof of Lemma~\ref{lemma:f_1-computation}.

\subsection{Details of the Main For Loop}
We specify a few more implementation details, that will be useful later for proving the correctness of the algorithm. Consider the $m$ quantum threads of the outermost Grover's search and the $n$ iteration performed to scan GD string $T$ while computing the oracle function $\shiftm$. We define values $j_{i,h}$, $a_{i,h}$ and $m_{i,h}$ that are integers stored by every quantum thread $h$, using different registers for every iteration $i$, as follows.
\begin{itemize}
    \item $j_{i,h} \in [1,m]$ identifies which prefix of $P$ is managed by quantum thread $h$ at iteration $i$. \item $a_{i,h}=1$ if $P[1..j_{i,h}+k_i-1]$ matches a suffix of $T[1]\cdots T[i]$ after iteration $i$, $a_{i,h}=0$ otherwise.
    \item $m_{i,h} = 1$ if at least one full match for $P$ was found in GD string $T[1]T[2]\cdots T[i]$ after iteration $i$ or during some previous iteration, $m_{i,h}=0$ otherwise;
\end{itemize}
Our algorithm will compute these values as specified in Subroutine~\ref{algo:variable-updates}. Intuitively, $a_{i,h}$ tracks whether shift $h$ of the pattern currently has an active prefix matching a suffix of the GD string scanned so far. We flip $a_{i,h}$ from $0$ to $1$ in those iterations in which we match a prefix of $P$ as a suffix of the current segment $T[i]$. In any later iteration $i'$, as long as we can extend the match to the current segment, we keep $a_{i',h}$ set to $1$, otherwise we set it to $0$. Value $m_{i,h}$ tracks whether we found a match at any previous iteration. If yes, $m_{i,h}$ is set to $1$ for the rest of the computation of $\shiftm$, and the corresponding thread $h$ will be a marked element in the Grover's search.
 
\RestyleAlgo{boxed}
\begin{algorithm}
    \SetAlgorithmName{Subroutine}{Subroutine}{List of subroutines}
    \eIf{$j_{i,h}+k_i-1 < m$}{
        $a_{i+1,h}\gets a_{i,h} \land \fullm(P[j_{i,h}..j_{i,h}+k_i-1],i)$\;
        $m_{i+1,h} \gets m_{i,h}$
    }{\tcp{$j_{i,h}+k_i-1 \geq m$}
    $a_{i+1,h}\leftarrow \backm(P[1..j_{i,h}+k_i-1-m],i)$\;
    $m_{i+1,h} \leftarrow m_{i,h} \lor (\forwm(P[j_{i,h}..m],i))\land a_{i,h})$\;
    }
    $j_{i+1,h}\gets 1 + (j_{i,h}+k_i-1 \mod m)$
\caption{Variable updates for one iteration of the computation of $\shiftm$. In the update of $j_{i+1,h}$, we subtract $1$ within the modulo operation and add it back outside to keep $j_{i+1,h}$ ranging from $1$ to $m$ as a function of $h$.  For each iteration $i$, we use new registers initialized to $\ket{0}$ to store the new values. }
\label{algo:variable-updates}
\end{algorithm}

\subsection{Reporting Occurrences}
\label{subsec:reporting-occurrences}
In order to report the occurrences, we use the output of our main algorithm. After running the outermost Grover's search and measuring, we obtain a value $h\in[1,m]$ identifying the quantum thread that was successful in finding an occurrence. Thus, we can just simulate the execution of that thread, that is we can start from column $h$ and test every character of pattern $P$ against the corresponding column of the GD string, restarting from the first character of $P$ every $m$ columns. Notice that, even if we use quantum subroutines for computing $\fullm$, $\forwm$ and $\backm$, their computation is performed at the level of a single segment, thus we can measure after processing each segment. Thus, after every iteration $i$ of our main for loop, we can check whether $m_{i,h}=1$. If yes, let $c_{i}$ be the index of the first column of segment $T[i]$, which we can retrieve in constant time by definition of GD string. Since we know that the match starts in $T[i']$, for some $i'<i$, and ends in $T[i]$, we compute the starting position of the match as
\[
c_{\operatorname{start}} = c_i - (c_i-h+1 \mod m)
\]
and the ending position as $c_{\operatorname{end}} = c_{\operatorname{start}}+m-1$. We recall that $h\in[1,m]$ is the index of the thread, and thus $h-1$ is the amount of positions by which the pattern is shifted for thread $h$.

\section{Analysis}
\label{sec:analysis}
We start by proving some intermediate lemmas, useful to show the correctness of our algorithm. Then, we combine these with an analysis of the complexity to prove our main result.
\begin{lemma}
    \label{lemma:invariant}
    The for loop of our quantum algorithm computing oracle function $\shiftm$ maintains the following invariant right after every iteration $i$:
    
    \textbf{Invariant}: for every $h\in[1,m]$ it holds that:
    \begin{itemize}
        \item $j_{i,h}=1+\left(\left(\sum_{t=1}^{i}k_t\right) - h+1 \mod m\right)$;
        \item $j_{i,h+1}=j_{i,h}+1$, if $h\in[1,m-1]$;
        \item $a_{i,h}=1$ if and only if $P[1..j_{i,h}-1]$ matches a suffix of $T[1]\cdots T[i]$;
        \item $m_{i,h}=1$ if and only if $P$ has a match in $T[1]\cdots T[i]$.
    \end{itemize}
\end{lemma}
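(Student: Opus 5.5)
We prove the invariant by induction on the iteration index $i$, following the updates of Subroutine~\ref{algo:variable-updates}. We treat the base case as the state before the first iteration ($i=0$, empty prefix of $T$). There we take $j_{0,h}=1+(-h+1 \bmod m)$, i.e.\ $j_{0,1}=1$ and $j_{0,h}=m-h+2$ for $h\ge 2$. We take $a_{0,h}=1$ exactly when $j_{0,h}=1$, since the empty prefix $P[1..0]$ trivially matches. We take $m_{0,h}=0$.

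A remark on conventions. Because the paper's wording for $a_{i,h}$ in the implementation section differs slightly from the lemma, we fix the lemma's reading: $j_{i,h}$ is the next pattern position thread $h$ must match at the start of $T[i+1]$, and $a_{i,h}$ says the already consumed prefix $P[1..j_{i,h}-1]$ is a suffix of $T[1]\cdots T[i]$. In the proof we align indices so that the update in iteration $i+1$ reads $j_{i,h}$, $a_{i,h}$ and produces the values with index $i+1$.

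\textbf{First two items.} These are pure arithmetic. The update $j_{i+1,h}=1+(j_{i,h}+k_{i+1}-1 \bmod m)$ with the inductive formula gives
\[
j_{i+1,h}=1+\Bigl(\sum_{t=1}^{i+1}k_t-h+1 \bmod m\Bigr),
\]
because adding $k_{i+1}$ commutes with reduction mod $m$. The item $j_{i,h+1}=j_{i,h}+1$ then follows from the closed form. Read literally, it fails at the single wrap-around thread where $j_{i,h}=m$ (there $j_{i,h+1}=1$). So I would state it modulo $m$, i.e.\ within $[1..m]$ cyclically, and note that this is the intended meaning.

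\textbf{Last two items.} Here we use the standing assumption $k_i<m$, which guarantees that in any single segment at most one boundary of a shifted copy of $P$ occurs. We split into the two branches of Subroutine~\ref{algo:variable-updates}, assuming each of $\fullm,\forwm,\backm$ is computed correctly (its error probability is handled separately in Lemma~\ref{lemma:f_1-computation}).

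In the first branch, $j_{i,h}+k_{i+1}-1<m$, so segment $T[i+1]$ lies strictly inside one copy of $P$. Hence $P[1..j_{i+1,h}-1]$ is a suffix of $T[1]\cdots T[i+1]$ iff $P[1..j_{i,h}-1]$ is a suffix of $T[1]\cdots T[i]$ and $P[j_{i,h}..j_{i,h}+k_{i+1}-1]\in T[i+1]$. This is exactly $a\wedge\fullm$. Also no occurrence can end in $T[i+1]$ for this shift, so $m$ is unchanged.

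In the second branch, the copy ends inside $T[i+1]$ at pattern position $m$, and a fresh copy begins. The new active prefix is $P[1..j_{i,h}+k_{i+1}-1-m]$, which must be a suffix of some string of $T[i+1]$; this is $\backm$. An occurrence ends here iff the old prefix was active and $P[j_{i,h}..m]$ is a prefix of a string in $T[i+1]$; this is $a\wedge\forwm$. When $j_{i,h}=1$ the first branch cannot apply, so the empty prefix case is consistent.

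\textbf{Main obstacle.} The fourth item as stated ("$P$ has a match in $T[1]\cdots T[i]$") is false per thread. The argument only yields that $m_{i,h}=1$ iff $P$ has a match ending in $T[1]\cdots T[i]$ starting at a column congruent to $h$ mod $m$. The genuine subtlety is showing that the definition of match (strings $S_i,\dots,S_j$ with $P=S_i[a..k_i]\cdots S_j[1..b]$) decomposes exactly along the segment boundaries the thread sees. I would prove this by inducting over segments crossed by the occurrence, using that the occurrence starting at column $h+rm$ aligns $P$'s position $j_{i,h}$ with the first column of $T[i+1]$. Finally, the unshifted statement follows by taking the disjunction over $h\in[1..m]$, since every start column is congruent to some $h$.
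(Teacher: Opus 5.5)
Your route is the paper's: induction on the iteration, modular arithmetic for $j_{i,h}$, and the two-branch case analysis of Subroutine~\ref{algo:variable-updates} for $a$ and $m$. Three of your deviations correctly repair the paper's own proof:
\begin{itemize}
\item The paper sets $a_{0,h}=0$ for all $h$. For $h=1$ this contradicts the third item, since $P[1..0]$ is empty, and it would make thread $1$ miss an occurrence starting at column $1$.
\item The paper's index computation adds $k_i$ while claiming $\sum_{t=1}^{i+1}k_t$.
\item The fourth item holds only per thread, for occurrences starting at columns $\equiv h \pmod m$. That is how $\shiftm$ is defined and how Lemma~\ref{lemma:f_1-computation} uses it.
\end{itemize}
One small slip: when $j_{i,h}=1$ it is the \emph{second} branch that cannot apply, because $k_{i+1}<m$ forces $j_{i,h}+k_{i+1}-1<m$.

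The step that fails is the second item. Write $S_i=\sum_{t=1}^{i}k_t$. The closed form you prove gives
\[
j_{i,h+1}-1 \equiv S_i-h \equiv (j_{i,h}-1)-1 \pmod{m},
\]
so $j_{i,h+1}\equiv j_{i,h}-1$, not $+1$. Your own base values show this: $j_{0,2}=m$ and $j_{0,3}=m-1$. So the relation does not ``follow from the closed form''. Your diagnosis of a single wrap-around failure with $j_{i,h}=m$ and $j_{i,h+1}=1$ is also wrong: $j_{i,h}=m$ gives $j_{i,h+1}=m-1$, and the only wrap is $j_{i,h}=1\mapsto j_{i,h+1}=m$.

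For $m\ge 3$ the item is false even cyclically, so no proof can establish it. The paper's argument (``adding the same value $k_i$ for every $h$ maintains the property'') shows only preservation and never checks the base case, where the item already fails. The correct statement is $j_{i,h+1}=j_{i,h}-1$ if $j_{i,h}\ge 2$, and $j_{i,h+1}=m$ if $j_{i,h}=1$. This is harmless downstream: the covering argument in Lemma~\ref{lemma:f_1-computation} only needs $h\mapsto j_{i,h}$ to be a bijection of $[1..m]$, which holds with either sign.
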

\begin{proof}
    To prove the correctness of the updates, we proceed by induction on the number of performed iterations $i$.
    
    \textbf{Base case}, $i=0$. After the system is initialized with all threads active, and before performing the first iteration, we have $j_{0,h}=1+(-h+1 \mod m)$, $a_{0,h}=0$, $m_{0,h}=0$ for every $h \in [1,m]$. The invariant trivially holds for $a_{0,h}$ and $m_{0,h}$, as no text has been processed and no partial or full matches exist beyond the empty prefix. For $j_{0,h}$, we have $\sum_{t=1}^{0}k_t=0$ since we processed no segments. We then recall that $h$ is the thread identifier, thus $j_{0,h}$ is shifted by $h-1$ positions. The first thread starts from the first position $j_{0,1}=1+(0\mod m)=1$; the second has a shift of one position, thus starting at $j_{0,2}=1+(-1\mod m)=m$; the third has a shift of two positions, starting from $j_{0,3}=1+(-2\mod m)=m-1$, and so on, with the $h$-th starting at
    \[j_{0,h}=1+(-(h-1) \mod m)=1+(-h+1 \mod m).\]
    
    \textbf{Inductive case}, $i>1$. We assume the invariant holds after iteration $i$, and we prove that it still holds after iteration $i+1$, by showing that the values for every thread $h$ are updated correctly. The new values of $j_{i,h}$, $a_{i,h}$ and $m_{i,h}$ are computed according to Subroutine~\ref{algo:variable-updates}. Let us first handle the indexes. We always increment $j_{i,h}$ by $k_i$ at the end of the iteration, and since by inductive hypothesis $j_{i,h}=1+\left(\left(\sum_{t=1}^{i}k_t\right) - h+1\right) \mod m$, we have
    \begin{align*}
        j_{i+1,h}   =& 1+\left(j_{i,h} + k_i - 1 \mod m\right)\\
                    =& 1+\left(1+\left(\sum_{t=1}^{i}k_t \right) - h+1 + k_i-1 \mod m\right)\\
                    =& 1+\left(\left(\sum_{t=1}^{i+1}k_t\right) - h+1 \mod m\right)
    \end{align*}
    Adding the same value $k_i$ for every $h$ also maintains the property that $j_{i+1,h+1}=j_{i+1,h}+1$. For $a_{i,h}$ and $m_{i,h}$, we have two cases: (i) $j_{i+1,h}+k_{i+1}-1 < m$, (ii) $j_{i+1,h}+k_{i+1}-1 \geq m$.
    
    In case (i), if $a_{i,h}=0$ then by inductive hypothesis prefix $P[1..j_{i,h}-1]$ is not matching, and we correctly set $a_{i+1,h}=0$. Otherwise, $a_{i,h}=1$ and by inductive hypothesis the prefix $P[1..j_{i,h}-1]$ has an active match. If that occurrence extends into $T[i+1]$, there must exists a string $s \in T[i+1]$ matching $P[j_{i,h}..j_{i,h}+k_{i+1}-1]$. Whether such a string exists or not is determined by the result of $\fullm(P[j_{i,h}..j_{i,h}+k_{i+1}-1],i+1)$, and we are correctly setting $a_{i+1,h}$ to this value. For what concerns $m_{i+1,h}$, since no new match can be detected in this case, it is correct to carry over the previous value $m_{i,h}$.
    
    In case (ii), an occurrence of $P$ might end in $T[i+1]$, a new one could start, and the record of an old one could carry over. Whether a string $s \in T[i+1]$ has a prefix matching $P[j_{i,h}..m]$ or not is found by computing $\forwm(P[j_{i,h}..m],i+1)$. Then, $m_{i+1,h}$ is set to $1$ if both $\forwm(P[j_{i,h}..m],i+1)=1$ and $a_{i,h}=1$, that is suffix $P[j_{i,h}..m]$ matches in the current segment, and by inductive hypothesis there is an active match for prefix $P[1..j_{i,h}-1]$. Otherwise, no full match can be found, and $m_{i+1,h}$ remains $0$. If it was the case that $m_{i,h}=1$, by inductive hypothesis we already found a match at a previous iteration, thus we correctly set $m_{i+1,h}=1$. Finally, $a_{i+1,h}$ must be $1$ when prefix $P[1..j_{i+1,h}-1]$ matches a suffix of a string in $T[i+1]$, which is done with $\backm(P([1..j_{i+1,h}-1],i+1)$.
\end{proof}

\begin{lemma}
\label{lemma:f_1-computation}
    Let GD string $T$ have $n$ segments $T[i]$ each of size $k_i$. The for-loop of our quantum algorithm correctly computes oracle function\\ $\shiftm(h)$ for each quantum thread $h$ in time $\tilde{O}(\sum_{i=1}^n\sqrt{T[i]\cdot k_i})$ with constant probability of success $p > 2/3$.
\end{lemma}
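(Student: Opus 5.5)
Correctness follows from Lemma~\ref{lemma:invariant} once every call to $\fullm$, $\forwm$ and $\backm$ inside the loop returns the right answer; the main work is the cost and error analysis of the nested Grover searches. Here $T[i]$ in the bound is read as $|T[i]|$, the number of strings in segment $i$.

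\textbf{Correctness.} Suppose all subroutine calls are correct. By Lemma~\ref{lemma:invariant}, after iteration $n$ we have $m_{n,h}=1$ if and only if $P$ has a match in $T$. Let $c$ be the starting column of such a match. By the formula for $j_{i,h}$, the prefix $P[1..j-1]$ tracked by $a_{i,h}$ always refers to occurrences starting at a column congruent to $h$ modulo $m$. So the match found by thread $h$ starts at some column $h+r\cdot m$, which gives $m_{n,h}=\shiftm(h)$.

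\textbf{Cost of one segment.} Fix iteration $i$ and write $s$ for the relevant substring of $P$, with $|s|\le k_i$.
\begin{itemize}
\item The inner Grover search over the $\le k_i$ positions of a candidate string $S\in T[i]$ looks for a mismatch with $s$. It costs $O(\sqrt{k_i})$ queries by Theorem~\ref{thm:amplitude-amplification}, with the QRAM query model giving unit-cost character access.
\item The outer amplitude amplification runs over the $|T[i]|$ strings of the segment, using the inner search as its oracle. This costs $O(\sqrt{|T[i]|})$ oracle calls.
\item Since the inner oracle errs with constant probability on both sides, we invoke the result of~\cite{HMDeW03}. It makes this composition cost $O(\sqrt{|T[i]|\,k_i})$ up to a logarithmic factor, and yields a constant-error procedure.
\end{itemize}
In case (ii) of Subroutine~\ref{algo:variable-updates}, $\forwm$ and $\backm$ are computed the same way, by restricting each string to its prefix or suffix. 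This at most doubles the cost. The remaining index arithmetic and bit updates cost $O(1)$ per iteration.

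\textbf{Error control and summing up.} Each segment needs only a constant number of subroutine calls. We repeat each call $O(\log n)$ times and take the majority vote, which drives its failure probability below $1/(cn)$ for a suitable constant $c$. A union bound over the $n$ iterations then gives a total failure probability below $1/3$, so the whole loop is correct with probability $p>2/3$. Summing the per-segment costs gives $\sum_{i=1}^n O(\log n\sqrt{|T[i]|k_i})=\tilde O(\sum_i\sqrt{|T[i]|k_i})$.

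\textbf{Main obstacle.} The delicate step is running this loop coherently inside the outer superposition over $h$. A majority vote and intermediate results cannot be measured there. Instead, we compute each subroutine output into fresh registers, as the algorithm already does with new registers per iteration. The bounded-error oracle then behaves like an $\epsilon$-approximation of the ideal unitary. Garbage from the failed branches contributes total error at most $\sum_i O(\sqrt{1/(cn)})$ in norm unless it is uncomputed. I would therefore first try to choose the repetition count so that each per-call error is $O(1/n^2)$, which keeps the total norm error constant while still costing only $O(\log n)$ repetitions, and then uncompute the ancillas. This keeps the oracle usable for the outer search of Theorem~\ref{thm:amplitude-amplification}.
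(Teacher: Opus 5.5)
Your proposal is correct and takes the same route as the paper. Correctness comes from Lemma~\ref{lemma:invariant}, each segment costs $O(\sqrt{|T[i]|\,k_i})$ via two nested Grover searches with \cite{HMDeW03} handling the bounded-error inner oracle, and $O(\log n)$ repetitions per segment plus a union bound over the $n$ segments give $p>2/3$. Your additions are refinements of points the paper glosses over, not a different argument: the residue-class observation (better stated as the class-specific form of the invariant, since the literal form makes $m_{n,h}$ independent of $h$), majority voting in place of the paper's $q_i<3^{-\log_3(cn)}$ estimate, and the coherent error and uncomputation accounting.
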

\begin{proof}
    Using Lemma~\ref{lemma:invariant} we know that, after iteration $n$, we have $m_{n,h}=1$ if and only if thread $h$ found at least one match for $P$ in $T[1]\cdots T[n]=T$, which guarantees the correctness of the computation of $\shiftm(h)$ for every $h$. Notice that the innermost Grover's search gives a negative answer when it fails to find single-character mismatches, but this is used as a positive answer for the oracle of the Grover's search performed at the level of a segment, meaning we can have two-sided errors. Consequently, the correctness of the entire algorithm follows from Lemma~\ref{lemma:invariant}, Theorem~\ref{thm:amplitude-amplification} for nesting Grover's searches, its generalizations to two-sided errors scenarios~\cite{HMDeW03}, and the following observation. At any iteration $i$ of the for-loop, for every column in $T[i]$ there exists a thread $h$ that tries to start a match from that column. To see this, recall that $j_{i,h}$ is a position in $P$ and that a thread $h$ tries to match prefix $P[1..j_{i,h}+k_i-1]$ as a suffix of a string in $T[i]$ when $j_{i,h}+k_i-1 \geq m$. Thus, $j_{i,h}+k_i-1$ assumes all values between $1$ and $k_i$ as a function of $h$, which means there is a thread trying to start a match for every column.

    The success probability of $p>2/3$ is obtained by bounding the failure probability $q=1-p$ of computing $\shiftm$ by $1/3$. To see this, let us consider the failure probability $q_i$ of the Grover's search for a single segment $T[i]$. Since here we are nesting two Grover's searches, we get the failure probability bounded by $1/3$. We run this procedure $\log_3(c n)$ times, thus bounding the error probability as $q_i<1/3^{\log_3(c n)}$, for some constant $c$. Since we fail as long as any of the $n$ segment-level Grover's searches  fails, we can use the union bound to obtain that $q \leq \sum_{i=1}^n q_i \leq n/3^{\log_3(c n)}$. By choosing $c=4$, we get $q \leq 1/4 < 1/3$, implying $p>2/3$.

    For what concerns the complexity, at iteration $i$, functions $\fullm$, $\forwm$ and $\backm$ are computed using two nested Grover's searches. The inner Grover's search checks for single-character mismatches over strings of length no more than $k_i$, thus its complexity is $O\left(\sqrt{k_i}\right)$. The outer Grover's search runs over $|T[i]|$ many strings, using the inner search as oracle function, yielding a complexity of $O\left(\sqrt{|T[i]|\cdot k_i}\right)$ for processing one segment. For each segment, we run the outer Grover's search $O(\log n)$ times. Since we process $n$ segments sequentially, we have a total complexity of $\tilde{O}(\sum_{i=1}^n\sqrt{T[i]\cdot k_i})$.
\end{proof}

In order to give the final complexity analysis, we need to show how to bound the complexity of the oracle function $\shiftm$.
\begin{lemma}
\label{lemma:bound}
Let $N = \sum_{i=1}^n |T[i]| k_i$ be the total length of all strings constituting the GD string $T$. The following inequality holds:
$$ \sum_{i=1}^n \sqrt{|T[i]| k_i} \le \sqrt{nN} $$
\end{lemma}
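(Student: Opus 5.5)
This is a direct application of the Cauchy--Schwarz inequality to the vectors $(1,\dots,1)\in\mathbb{R}^n$ and $(\sqrt{|T[1]|k_1},\dots,\sqrt{|T[n]|k_n})$. Set $x_i=\sqrt{|T[i]|\,k_i}$, which is a nonnegative real since $|T[i]|>0$ and $k_i>0$ by the definition of a GD string. The plan is to write
\[
\sum_{i=1}^n \sqrt{|T[i]|k_i} \;=\; \sum_{i=1}^n 1\cdot x_i \;\le\; \Bigl(\sum_{i=1}^n 1^2\Bigr)^{1/2}\Bigl(\sum_{i=1}^n x_i^2\Bigr)^{1/2},
\]
and then identify the two factors. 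The first factor is $\sqrt{n}$. The second is $\sqrt{\sum_{i=1}^n |T[i]|k_i}$.

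The only thing to verify is that $\sum_{i=1}^n |T[i]|k_i$ equals $N$ as defined in Section~\ref{subsec:gd-strings}, namely $N=\sum_{i=1}^n\sum_{S\in T[i]}|S|$. This holds because every string of $T[i]$ has length exactly $k_i$, since $T[i]\subseteq\Sigma^{k_i}$. The inner sum therefore equals $|T[i]|\,k_i$, which also matches the expression given in the statement.

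Combining the two factors gives $\sum_i\sqrt{|T[i]|k_i}\le\sqrt{n}\sqrt{N}=\sqrt{nN}$. As an alternative, the same bound follows from Jensen's inequality for the concave square root: $\frac1n\sum_i x_i\le\sqrt{\frac1n\sum_i x_i^2}$, and multiplying both sides by $n$ gives the claim.

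There is no real obstacle here. The only care needed is to read the paper's notation $\sqrt{T[i]\cdot k_i}$ in Lemma~\ref{lemma:f_1-computation} as $\sqrt{|T[i]|\cdot k_i}$, so that this lemma converts the per-segment cost bound into the global $\tilde{O}(\sqrt{nN})$ bound for computing $\shiftm$. Together with the outer $O(\sqrt{m})$ Grover iterations from Theorem~\ref{thm:amplitude-amplification}, that yields Theorem~\ref{thm:main-quantum-algo}.
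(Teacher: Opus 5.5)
Your proposal is correct and is the same argument as the paper: Cauchy--Schwarz with $a_i=1$ and $b_i=\sqrt{|T[i]|k_i}$, then identifying $\sum_i |T[i]|k_i$ with $N$ because every string in $T[i]$ has length $k_i$. The Jensen (QM--AM) alternative you mention is an equivalent restatement of the same inequality.
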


\begin{proof}
We apply the Cauchy-Schwarz inequality, which states that for any sequences of real numbers $(a_1, \dots, a_n)$ and $(b_1, \dots, b_n)$,
$$ \left( \sum_{i=1}^n a_i b_i \right)^2 \le \left( \sum_{i=1}^n a_i^2 \right) \left( \sum_{i=1}^n b_i^2 \right) $$
\begin{itemize}
    \item Let $a_i = 1$ for all $i = 1, \dots, n$.
    \item Let $b_i = \sqrt{|T[i]| k_i}$.
\end{itemize}

Substituting these into the inequality we have,
$$ \left( \sum_{i=1}^n 1 \cdot \sqrt{|T[i]| k_i} \right)^2 \le \left( \sum_{i=1}^n 1^2 \right) \left( \sum_{i=1}^n \left(\sqrt{|T[i]| k_i}\right)^2 \right) $$
where 
$$\sum_{i=1}^n 1^2 = \sum_{i=1}^n 1 = n \quad\text{and}\quad \sum_{i=1}^n \left(\sqrt{|T[i]| k_i}\right)^2 = \sum_{i=1}^n |T[i]| k_i.$$

By definition, $\sum_{i=1}^n |T[i]| k_i$ is exactly $N$, the total size of the GD string.

Therefore,
$$ \left( \sum_{i=1}^n \sqrt{|T[i]| k_i} \right)^2 \le n \cdot N $$

Taking the square root of both sides, we obtain,
$$ \sum_{i=1}^n \sqrt{|T[i]| k_i} \le \sqrt{nN} $$
\end{proof}

Our main result Theorem~\ref{thm:main-quantum-algo} states that \smgd on a pattern string $P$ of length $m$ and a GD string of $n$ segments and size $N$ can be solved in time $\tilde{O}(\sqrt{mnN})$ by a quantum algorithm with high probability. We are now ready to formally prove this result.
\begin{proof}[Proof of Theorem~\ref{thm:main-quantum-algo}]
    From Lemma \ref{lemma:f_1-computation}, if an occurrence of $P$ exists in $T$, there exists at least one thread $h$ such that $m_{n,h}=1$. The algorithm applies a Grover's search to compute $\shiftm$ to the equally balanced superposition $\sum_{h=0}^m\ket{h}\ket{0}\otimes\cdots\otimes\ket{0}$, and from the properties of Grover's search~\cite{Grover96} follows that measuring the system after at most  $O(\sqrt{m})$ iterations yields a state $\ket{h}$ such that $\shiftm(h)=1$ with probability $p>2/3$, if at least one match exists. If no match exists, the oracle never marks any state and the measurement will return a random thread index. We can verify whether $\shiftm(h)=1$ or $\shiftm(h)=0$ by running one more evaluation of the oracle function. In doing so, we can also report one occurrence, by performing the steps described in Section~\ref{subsec:reporting-occurrences} as we recompute the oracle function.

    As mentioned above, the computation of this Grover's search requires $O(\sqrt{m})$ queries to oracle function $\shiftm$. Lemma~\ref{lemma:f_1-computation} tells us that computing the oracle function takes $\tilde{O}\left(\sum_{i=1}^n\sqrt{T[i]\cdot k_i}\right)$, and using Lemma~\ref{lemma:bound} this can be bounded by $\tilde{O}\left(\sqrt{nN}\right)$. This leads to a final complexity of $\tilde{O}\left(\sqrt{mnN}\right)$, with an additional $\tilde{O}\left(\sqrt{nN}\right)$ needed to recompute the oracle function and report one occurrence.
\end{proof}

For what concerns space complexity, we need $O(\log m)$ qubits for the register representing thread IDs, and $O(\log N)$ for the registers introduced at each iteration, as values $k_i$ could be order of $O(N)$ in the worst case. Since we need to introduce $n$ of these registers, one per iteration, the total space complexity is $O(\log m+n\log N)$ qubits.

\section{Dropping Assumptions}
\label{sec:dropping-assumptions}
So far, we assumed that $k_i<m$ for every $1\leq i\leq n$. Let us now drop this assumption. We can still find a match for $P$ in $T$ without losing on the complexity. To this end, let us first design a quantum algorithm that checks whether $P$ has a match as a substring of a string in $T$. We achieve this with two nested Grover's searches, the outer one ranging over all $N$ characters in $T$, and the inner one ranging over all $m$ pattern positions in $P$. The inner Grover's search uses an oracle function that detects two things: single characters mismatches, and whether the character that we are checking is the last one in a segment string when we are not on the last position of the pattern. The outer Grover's search uses the inner one as oracle function. Thus, this takes $O\left(\sqrt{mN}\right)$ in total, which is less than the overall complexity $\tilde{O}\left(\sqrt{mnN}\right)$.
 
If the above preprocessing procedure found no matches, we can continue with our main algorithm assuming that $P$ does not match as a substring of a string in $T$. However, it can still be the case that $k_i>m$ for some $i$, requiring us to make a small modification to the algorithm. In the main for loop, whenever we encounter a segment $T[i]$ such that $k_i>m$, we do not compute $\fullm$, but only $\forwm$ and $\backm$. Now it is no longer true what we showed in the proof of Lemma~\ref{lemma:f_1-computation}, that is not all columns will have a quantum thread that tries to start a match. However, it is easy to see that the columns that we do not cover are the ones already checked by the preprocessing procedure explained above.

\section{Rephrasing Previous Results}
\label{sec:rephrasing}
The quantum algorithm of Equi et al.~\cite{EGM23} simulates a bit-parallel algorithm run over a level-DAG, where a bit-vector is computed for every node. The quantum superposition simulates the bit-vector, and the crucial operation is the bitwise shift by one position, which is simulated by incrementing the identifier of each quantum thread by $1$. Although this is just an abstraction and no interference happens among the quantum threads, seeing the problem from this angle makes it hard to generalize the approach to other structures like GD strings, and it is not clear how to retrieve an occurrence. In particular, it is hard to reason about what computation a single quantum thread does, as in the bit-parallel abstraction the classical algorithm heavily relies on the shift operation, which brings a new bit of information from the previous bit-vector position to the current one. 

If we cast this algorithm in our framework, it is clear that each quantum thread is handling a different shift of the pattern. At level $l$, some nodes in the level are marked as having an active prefix match. Quantum thread $h$ scans all these nodes and checks which ones match the current pattern character. If a node has a matching character, then all the out-neighbors of that node are marked as having an active prefix match. Then, the computation continues to the next level. In the end, the result of the Grover's search is the index of a thread that found a match. To locate an occurrence, simply run that single thread again classically, checking whether a match is active every time character $P[m]$ is tested. We defer the formal details of this discussion to the extended version of this work.

\bibliography{biblio}

@inproceedings{Asconeetal24,
  author       = {Rocco Ascone and
                  Giulia Bernardini and
                  Alessio Conte and
                  Massimo Equi and
                  Est{\'{e}}ban Gabory and
                  Roberto Grossi and
                  Nadia Pisanti},
  editor       = {Solon P. Pissis and
                  Wing{-}Kin Sung},
  title        = {A Unifying Taxonomy of Pattern Matching in Degenerate Strings and
                  Founder Graphs},
  booktitle    = {24th International Workshop on Algorithms in Bioinformatics, {WABI}
                  2024, Royal Holloway, London, United Kingdom, September 2-4, 2024},
  series       = {LIPIcs},
  volume       = {312},
  pages        = {14:1--14:21},
  publisher    = {Schloss Dagstuhl - Leibniz-Zentrum f{\"{u}}r Informatik},
  year         = {2024},
  url          = {https://doi.org/10.4230/LIPIcs.WABI.2024.14},
  doi          = {10.4230/LIPICS.WABI.2024.14},
  bibsource    = {dblp computer science bibliography, https://dblp.org}
}

@inproceedings{EGM23,
  author       = {Massimo Equi and
                  Arianne Meijer{-}van de Griend and
                  Veli M{\"{a}}kinen},
  editor       = {Laurent Bulteau and
                  Zsuzsanna Lipt{\'{a}}k},
  title        = {From Bit-Parallelism to Quantum String Matching for Labelled Graphs},
  booktitle    = {34th Annual Symposium on Combinatorial Pattern Matching, {CPM} 2023,
                  Marne-la-Vall{\'{e}}e, France, June 26-28, 2023},
  series       = {LIPIcs},
  volume       = {259},
  pages        = {9:1--9:20},
  publisher    = {Schloss Dagstuhl - Leibniz-Zentrum f{\"{u}}r Informatik},
  year         = {2023},
  url          = {https://doi.org/10.4230/LIPIcs.CPM.2023.9},
  doi          = {10.4230/LIPICS.CPM.2023.9},
  bibsource    = {dblp computer science bibliography, https://dblp.org}
}

@inproceedings{QAC25,
author = {Khadiev, Kamil and Serov, Danil},
title = {Quantum Algorithm for the Multiple String Matching Problem},
year = {2025},
isbn = {978-3-031-82696-2},
publisher = {Springer-Verlag},
address = {Berlin, Heidelberg},
url = {https://doi.org/10.1007/978-3-031-82697-9_5},
doi = {10.1007/978-3-031-82697-9_5},
booktitle = {SOFSEM 2025: Theory and Practice of Computer Science: 50th International Conference on Current Trends in Theory and Practice of Computer Science, SOFSEM 2025, Bratislava, Slovak Republic, January 20–23, 2025, Proceedings, Part II},
pages = {58–69},
numpages = {12},
location = {Bratislava, Slovakia}
}

@article{KMP77,
  author    = {Donald E. Knuth and
               James H. Morris Jr. and
               Vaughan R. Pratt},
  title     = {Fast Pattern Matching in Strings},
  journal   = {{SIAM} J. Comput.},
  volume    = {6},
  number    = {2},
  pages     = {323--350},
  year      = {1977},
  url       = {https://doi.org/10.1137/0206024},
  doi       = {10.1137/0206024},
  bibsource = {dblp computer science bibliography, https://dblp.org}
}

@book{NC10QCQI,
    place={Cambridge},
    title={Quantum Computation and Quantum Information: 10th Anniversary Edition},
    DOI={10.1017/CBO9780511976667},
    publisher={Cambridge University Press},
    author={Nielsen, Michael A. and Chuang, Isaac L.},
    year={2010}
}

@article{RV03,
    title = {String matching in {$O(\sqrt{n}+\sqrt{m})$} quantum time},
    author={Ramesh, Hariharan and Vinay, V},
    journal = {Journal of Discrete Algorithms},
    volume = {1},
    number = {1},
    pages = {103-110},
    year = {2003},
    note = {Combinatorial Algorithms},
    issn = {1570-8667},
    doi = {10.1016/S1570-8667(03)00010-8}
}

@inproceedings{DGT22,
  author    = {Parisa Darbari and
               Daniel Gibney and
               Sharma V. Thankachan},
  title     = {Quantum Time Complexity and Algorithms for Pattern Matching on Labeled
               Graphs},
  booktitle = {String Processing and Information Retrieval - 29th International Symposium,
               {SPIRE} 2022, Concepci{\'{o}}n, Chile, November 8-10, 2022, Proceedings},
  series    = {Lecture Notes in Computer Science},
  volume    = {13617},
  pages     = {303--314},
  publisher = {Springer},
  year      = {2022},
  url       = {https://doi.org/10.1007/978-3-031-20643-6\_22},
  doi       = {10.1007/978-3-031-20643-6\_22},
  bibsource = {dblp computer science bibliography, https://dblp.org}
}

@inproceedings{Grover96,
  author    = {Lov K. Grover},
  title     = {A Fast Quantum Mechanical Algorithm for Database Search},
  booktitle = {Proceedings of the Twenty-Eighth Annual {ACM} Symposium on the Theory
               of Computing, Philadelphia, Pennsylvania, USA, May 22-24, 1996},
  pages     = {212--219},
  publisher = {{ACM}},
  year      = {1996},
  url       = {https://doi.org/10.1145/237814.237866},
  doi       = {10.1145/237814.237866},
  bibsource = {dblp computer science bibliography, https://dblp.org}
}

@incollection{Brassard2002,
  author    = {Brassard, Gilles and H{\o}yer, Peter and Mosca, Michele and Tapp, Alain},
  title     = {Quantum amplitude amplification and estimation},
  booktitle = {Quantum Computation and Information},
  series    = {Contemporary Mathematics},
  volume    = {305},
  pages     = {53--74},
  year      = {2002},
  publisher = {American Mathematical Society},
  address   = {Providence, RI},
  isbn      = {0-8218-2140-7},
  doi       = {10.1090/conm/305/05215}
}

@article{LeGallSeddighin23,
  author       = {Fran{\c{c}}ois Le Gall and
                  Saeed Seddighin},
  title        = {Quantum Meets Fine-Grained Complexity: Sublinear Time Quantum Algorithms
                  for String Problems},
  journal      = {Algorithmica},
  volume       = {85},
  number       = {5},
  pages        = {1251--1286},
  year         = {2023},
  url          = {https://doi.org/10.1007/s00453-022-01066-z},
  doi          = {10.1007/S00453-022-01066-Z},
  bibsource    = {dblp computer science bibliography, https://dblp.org}
}

@article{JinNogler24,
  author       = {Ce Jin and
                  Jakob Nogler},
  title        = {Quantum Speed-Ups for String Synchronizing Sets, Longest Common Substring,
                  and \emph{k}-mismatch Matching},
  journal      = {{ACM} Trans. Algorithms},
  volume       = {20},
  number       = {4},
  pages        = {32:1--32:36},
  year         = {2024},
  url          = {https://doi.org/10.1145/3672395},
  doi          = {10.1145/3672395},
  bibsource    = {dblp computer science bibliography, https://dblp.org}
}

@article{AkmalJin23,
  author       = {Shyan Akmal and
                  Ce Jin},
  title        = {Near-Optimal Quantum Algorithms for String Problems},
  journal      = {Algorithmica},
  volume       = {85},
  number       = {8},
  pages        = {2260--2317},
  year         = {2023},
  url          = {https://doi.org/10.1007/s00453-022-01092-x},
  doi          = {10.1007/S00453-022-01092-X},
  bibsource    = {dblp computer science bibliography, https://dblp.org}
}

@article{WangYing24,
  author       = {Qisheng Wang and
                  Mingsheng Ying},
  title        = {Quantum Algorithm for Lexicographically Minimal String Rotation},
  journal      = {Theory Comput. Syst.},
  volume       = {68},
  number       = {1},
  pages        = {29--74},
  year         = {2024},
  url          = {https://doi.org/10.1007/s00224-023-10146-8},
  doi          = {10.1007/S00224-023-10146-8},
  bibsource    = {dblp computer science bibliography, https://dblp.org}
}

@article{BoroujeniEGHS21,
  author       = {Mahdi Boroujeni and
                  Soheil Ehsani and
                  Mohammad Ghodsi and
                  MohammadTaghi Hajiaghayi and
                  Saeed Seddighin},
  title        = {Approximating Edit Distance in Truly Subquadratic Time: Quantum and
                  MapReduce},
  journal      = {J. {ACM}},
  volume       = {68},
  number       = {3},
  pages        = {19:1--19:41},
  year         = {2021},
  url          = {https://doi.org/10.1145/3456807},
  doi          = {10.1145/3456807},
  bibsource    = {dblp computer science bibliography, https://dblp.org}
}

@inproceedings{GibneyJKT24,
  author       = {Daniel Gibney and
                  Ce Jin and
                  Tomasz Kociumaka and
                  Sharma V. Thankachan},
  editor       = {David P. Woodruff},
  title        = {Near-Optimal Quantum Algorithms for Bounded Edit Distance and Lempel-Ziv
                  Factorization},
  booktitle    = {Proceedings of the 2024 {ACM-SIAM} Symposium on Discrete Algorithms,
                  {SODA} 2024, Alexandria, VA, USA, January 7-10, 2024},
  pages        = {3302--3332},
  publisher    = {{SIAM}},
  year         = {2024},
  url          = {https://doi.org/10.1137/1.9781611977912.118},
  doi          = {10.1137/1.9781611977912.118},
  bibsource    = {dblp computer science bibliography, https://dblp.org}
}

@inproceedings{GrossiILPPRRVV17,
  author       = {Roberto Grossi and
                  Costas S. Iliopoulos and
                  Chang Liu and
                  Nadia Pisanti and
                  Solon P. Pissis and
                  Ahmad Retha and
                  Giovanna Rosone and
                  Fatima Vayani and
                  Luca Versari},
  title        = {On-Line Pattern Matching on Similar Texts},
  booktitle    = {28th Annual Symposium on Combinatorial Pattern Matching ({CPM})},
  series       = {LIPIcs},
  volume       = {78},
  pages        = {9:1--9:14},
  year         = {2017},
  _url          = {https://doi.org/10.4230/LIPIcs.CPM.2017.9},
  doi          = {10.4230/LIPICS.CPM.2017.9},
  timestamp    = {Thu, 14 Oct 2021 10:30:13 +0200},
  bib_url       = {https://dblp.org/rec/conf/cpm/GrossiILPPRRVV17.bib},
  bibsource    = {dblp computer science bibliography, https://dblp.org}
}

@article{IliopoulosKP21,
  author    = {Costas S. Iliopoulos and
               Ritu Kundu and
               Solon P. Pissis},
  title     = {Efficient pattern matching in elastic-degenerate strings},
  journal   = {Information and Computation},
  volume    = {279},
  pages     = {104616},
  year      = {2021},
  url       = {https://doi.org/10.1016/j.ic.2020.104616},
  doi       = {10.1016/j.ic.2020.104616},
  bibsource = {dblp computer science bibliography, https://dblp.org}
}

@article{RizzoENM24,
  author       = {Nicola Rizzo and
                  Massimo Equi and
                  Tuukka Norri and
                  Veli M{\"{a}}kinen},
  title        = {Elastic founder graphs improved and enhanced},
  journal      = {Theor. Comput. Sci.},
  volume       = {982},
  pages        = {114269},
  year         = {2024},
  url          = {https://doi.org/10.1016/j.tcs.2023.114269},
  doi          = {10.1016/J.TCS.2023.114269},
  bibsource    = {dblp computer science bibliography, https://dblp.org}
}

@inproceedings{AlzamelABGIPPR18,
  author       = {Mai Alzamel and
                  Lorraine A. K. Ayad and
                  Giulia Bernardini and
                  Roberto Grossi and
                  Costas S. Iliopoulos and
                  Nadia Pisanti and
                  Solon P. Pissis and
                  Giovanna Rosone},
  _editor       = {Laxmi Parida and
                  Esko Ukkonen},
  title        = {Degenerate String Comparison and Applications},
  booktitle    = {18th International Workshop on Algorithms in Bioinformatics ({WABI})},
  series       = {LIPIcs},
  volume       = {113},
  pages        = {21:1--21:14},
  year         = {2018},
  _url          = {https://doi.org/10.4230/LIPIcs.WABI.2018.21},
  doi          = {10.4230/LIPICS.WABI.2018.21},
  timestamp    = {Sun, 25 Oct 2020 23:09:19 +0100},
  bib_url       = {https://dblp.org/rec/conf/wabi/AlzamelA0GIPPR18.bib},
  bibsource    = {dblp computer science bibliography, https://dblp.org}
}

@article{AlzamelABGIPPR20,
  author       = {Mai Alzamel and
                  Lorraine A. K. Ayad and
                  Giulia Bernardini and
                  Roberto Grossi and
                  Costas S. Iliopoulos and
                  Nadia Pisanti and
                  Solon P. Pissis and
                  Giovanna Rosone},
  title        = {Comparing Degenerate Strings},
  journal      = {Fundam. Informaticae},
  volume       = {175},
  number       = {1-4},
  pages        = {41--58},
  year         = {2020},
  _url          = {https://doi.org/10.3233/FI-2020-1947},
  doi          = {10.3233/FI-2020-1947},
  timestamp    = {Mon, 03 Jan 2022 21:59:07 +0100},
  bib_url       = {https://dblp.org/rec/journals/fuin/AlzamelABGIPPR20.bib},
  bibsource    = {dblp computer science bibliography, https://dblp.org}
}

@inproceedings{MakinenCENT20,
  author       = {Veli M{\"{a}}kinen and
                  Bastien Cazaux and
                  Massimo Equi and
                  Tuukka Norri and
                  Alexandru I. Tomescu},
 title        = {Linear Time Construction of Indexable Founder Block Graphs},
  booktitle    = {20th International Workshop on Algorithms in Bioinformatics ({WABI})},
  series       = {LIPIcs},
  volume       = {172},
  pages        = {7:1--7:18},
  year         = {2020},
  _url          = {https://doi.org/10.4230/LIPIcs.WABI.2020.7},
  doi          = {10.4230/LIPICS.WABI.2020.7},
  timestamp    = {Thu, 16 Sep 2021 18:08:24 +0200},
  biburl       = {https://dblp.org/rec/conf/wabi/MakinenCENT20.bib},
  bibsource    = {dblp computer science bibliography, https://dblp.org}
}

@inproceedings{EquiNACTM21,
  author       = {Massimo Equi and
                  Tuukka Norri and
                  Jarno Alanko and
                  Bastien Cazaux and
                  Alexandru I. Tomescu and
                  Veli M{\"{a}}kinen},
  editor       = {Hee{-}Kap Ahn and
                  Kunihiko Sadakane},
  title        = {Algorithms and Complexity on Indexing Elastic Founder Graphs},
  booktitle    = {32nd International Symposium on Algorithms and Computation, {ISAAC}
                  2021, Fukuoka, Japan, December 6-8, 2021},
  series       = {LIPIcs},
  volume       = {212},
  pages        = {20:1--20:18},
  publisher    = {Schloss Dagstuhl - Leibniz-Zentrum f{\"{u}}r Informatik},
  year         = {2021},
  url          = {https://doi.org/10.4230/LIPIcs.ISAAC.2021.20},
  doi          = {10.4230/LIPICS.ISAAC.2021.20},
  bibsource    = {dblp computer science bibliography, https://dblp.org}
}

@article{IliopoulosMR08,
  author       = {Costas S. Iliopoulos and
                  Laurent Mouchard and
                  Mohammad Sohel Rahman},
  title        = {A New Approach to Pattern Matching in Degenerate {DNA/RNA} Sequences
                  and Distributed Pattern Matching},
  journal      = {Math. Comput. Sci.},
  volume       = {1},
  number       = {4},
  pages        = {557--569},
  year         = {2008},
  _url          = {https://doi.org/10.1007/s11786-007-0029-z},
  doi          = {10.1007/S11786-007-0029-Z},
  timestamp    = {Wed, 12 Aug 2020 10:36:08 +0200},
  bib_url       = {https://dblp.org/rec/journals/mics/IliopoulosMR08.bib},
  bibsource    = {dblp computer science bibliography, https://dblp.org}
}

@inproceedings{BuhrmanPS21,
  author       = {Harry Buhrman and
                  Subhasree Patro and
                  Florian Speelman},
  editor       = {Markus Bl{\"{a}}ser and
                  Benjamin Monmege},
  title        = {A Framework of Quantum Strong Exponential-Time Hypotheses},
  booktitle    = {38th International Symposium on Theoretical Aspects of Computer Science,
                  {STACS} 2021, Saarbr{\"{u}}cken, Germany (Virtual Conference),
                  March 16-19, 2021},
  series       = {LIPIcs},
  volume       = {187},
  pages        = {19:1--19:19},
  publisher    = {Schloss Dagstuhl - Leibniz-Zentrum f{\"{u}}r Informatik},
  year         = {2021},
  url          = {https://doi.org/10.4230/LIPIcs.STACS.2021.19},
  doi          = {10.4230/LIPICS.STACS.2021.19},
  bibsource    = {dblp computer science bibliography, https://dblp.org}
}

@inproceedings{HMDeW03,
author = {H\o{}yer, Peter and Mosca, Michele and De Wolf, Ronald},
title = {Quantum search on bounded-error inputs},
year = {2003},
isbn = {3540404937},
publisher = {Springer-Verlag},
address = {Berlin, Heidelberg},
booktitle = {Proceedings of the 30th International Conference on Automata, Languages and Programming},
pages = {291–299},
numpages = {9},
location = {Eindhoven, The Netherlands},
series = {ICALP'03}
}

\end{document}